\documentclass[journal,10pt]{IEEEtran}

\usepackage[T1]{fontenc}
\usepackage[utf8]{inputenc}
\usepackage{textcomp}
\usepackage{microtype}
\usepackage{amsmath,amssymb,amsfonts,amsthm}
\usepackage{booktabs}
\usepackage{graphicx}
\usepackage{xcolor}
\usepackage{tikz}
\usepackage{algorithm}
\usepackage{algpseudocode}
\usepackage[numbers,sort&compress]{natbib}
\definecolor{linkpurple}{RGB}{92,35,145}
\usepackage[colorlinks=true,linkcolor=linkpurple,citecolor=linkpurple,urlcolor=linkpurple]{hyperref}
\usetikzlibrary{arrows.meta,positioning,shapes.geometric,fit,backgrounds}

\newcommand{\Description}[1]{}
\theoremstyle{definition}
\newtheorem{theorem}{Theorem}
\newtheorem{lemma}{Lemma}
\newtheorem{proposition}{Proposition}

\newcommand{\fallbackfigure}[2]{%
	\fbox{\parbox[c][0.22\textheight][c]{#1}{\centering #2}}
}

\newcommand{\optionalfigure}[3]{%
	\IfFileExists{#1}{\includegraphics[width=#2]{#1}}{\fallbackfigure{#2}{#3}}
}

\title{\LARGE In-Orbit Intelligence or Ground Offloading?\\[0.15ex]
Inference Freshness under Intermittent Satellite Connectivity}
\author{Ayşe~Nur~Pehlivanoğlu, Aimin~Li, and Elif~Uysal,~Fellow,~IEEE%
\thanks{The authors are with the Communication Networks Research Group (CNG), METU, Ankara, Türkiye (e-mail: ayse.pehlivanoglu@metu.edu.tr; aimin@metu.edu.tr; uelif@metu.edu.tr).}}

\begin{document}

\maketitle

	\begin{abstract}
		%This paper studies inference freshness in a hybrid satellite architecture with onboard and ground inference under intermittent connectivity.At each decision epoch, the satellite chooses among waiting, onboard computation, cached semantic transmission, and raw-data offloading. We adopt Age of Inference (\textsc{AoInf}) as the performance metric, where the age resets only upon successful task-valid updates. We formulate long-run average \textsc{AoInf} minimization as a finite-state average-cost semi-Markov decision process whose state captures the ground \textsc{AoInf}, orbital contact phase, cache occupancy, and cache age. We then transform the SMDP into an equivalent average-cost MDP and compute the policy via normalized relative value iteration (RVI). Numerical results show that the hybrid policy reduces average \textsc{AoInf} relative to random, onboard-only, and offload-only baselines.
		This paper studies how to balance onboard and ground computation under intermittent LEO connectivity for optimized inference freshness. As connectivity varies in time, the system switches among the actions of onboard computation, cached semantic transmission, raw-data offloading, and waiting. We define Age of Inference (\textsc{AoInf}) as the performance metric, where the age resets only upon successful task-valid updates. We formulate long-run average \textsc{AoInf} minimization as a finite-state average-cost semi-Markov decision process whose state captures the ground \textsc{AoInf}, orbital contact phase, cache occupancy, and cache age. We then transform the SMDP into an equivalent average-cost MDP and compute the solution via normalized relative value iteration (RVI). Numerical results indicate that the resulting hybrid policy reduces average \textsc{AoInf} relative to  onboard-only and offload-only baselines, while requiring less computational resources on the satellite than the former, and fewer communication resources than the latter.
	\end{abstract}

	\begin{IEEEkeywords}
		LEO networks, Age of Inference, semantic caching, hybrid inference, onboard intelligence, in-orbit computation, earth observation, goal oriented communication.
	\end{IEEEkeywords}
	
	\section{Introduction}
	\subsection{Background}
	LEO satellite networks are emerging as an orbital edge layer for sensing, communication, and time-sensitive inference. With
	onboard processors and higher-rate satellite-ground links, the bottleneck is shifting from raw-observation downlink to the timely delivery of task-relevant inference before losing operational value~\cite{George2018OnboardProcessing,Bhattacherjee2020InOrbitComputing,Wang2023SatelliteComputing,Yin2025OrbitalEdgeSurvey}. In disaster response, maritime monitoring, and IoT sensing, operators often require goal-oriented outcomes such as alerts, target presence, or anomaly labels rather than full data reconstruction, in line with recent discussions of goal-oriented communications for non-terrestrial networks~\cite{Gunduz2023BeyondBits,uysal2024goalorient,Li2026FreshnessEffectiveness}.
	
	\begin{figure*}[t]
		\centering
		\includegraphics[width=0.7\textwidth]{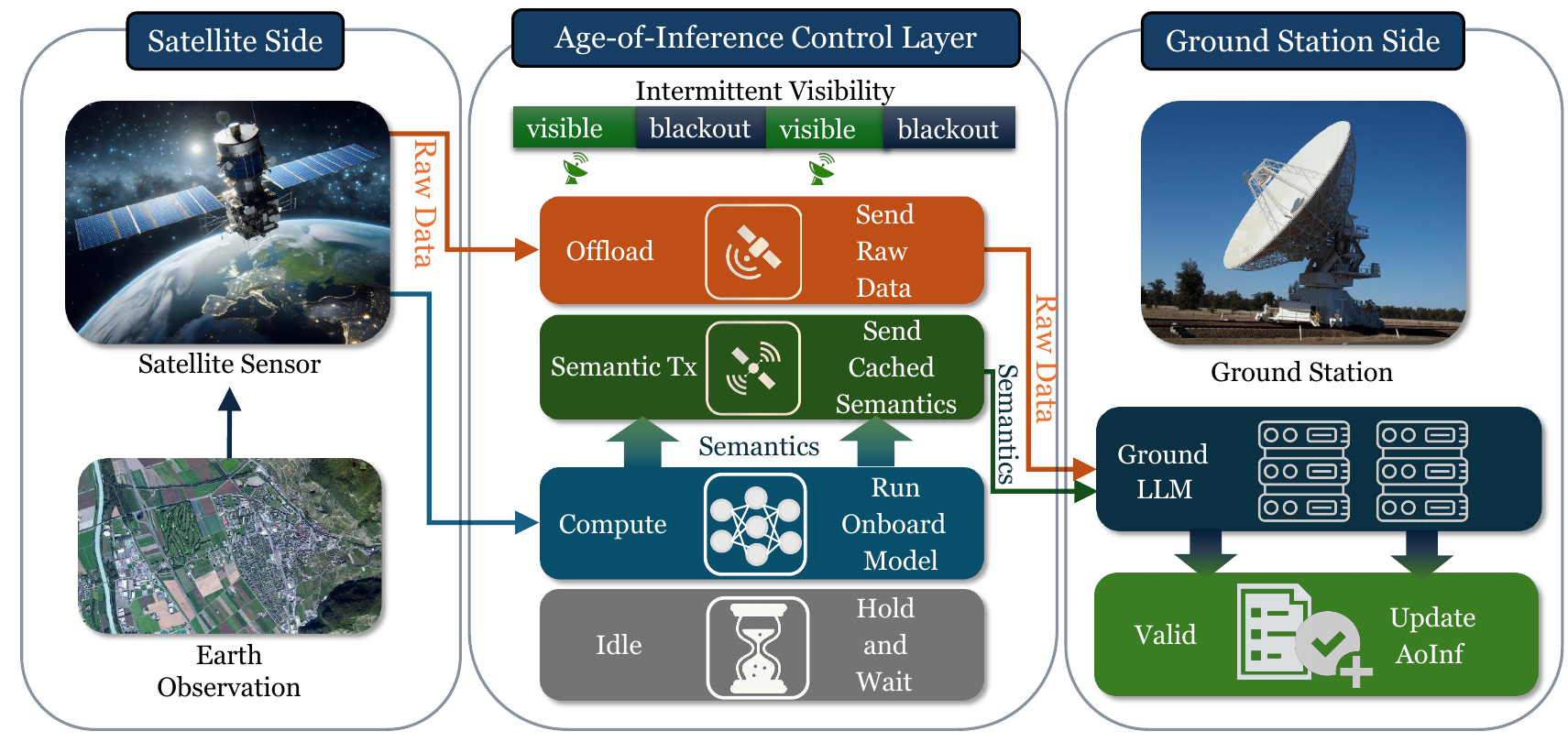}
		\caption{Contact-constrained hybrid inference system. The controller observes the ground AoInf, contact phase, cache
			occupancy, and cache age, and then chooses among waiting, onboard recomputation, cached semantic transmission, and
			raw-data offloading.}
		\Description{A system diagram showing the logic of the hybrid controller. Sensor observations are processed onboard into semantic results, stored in a cache, and passed to a controller that observes the Age of Inference, contact phase, cache occupancy, and cache age. The controller may idle, recompute, transmit cached semantics, or offload raw data to the ground. Ground inference and validation update the Age of Inference only on success.}
		\label{fig:system-logic}
	\end{figure*}
	
	Such outcomes arise in a Satellite-Terrestrial Integrated Network (STIN) through onboard semantic computing, semantic transmission, or raw-data offloading. Although STIN architectures distribute computation across orbital and terrestrial infrastructure~\cite{Xie2020SatelliteTerrestrialEdge,Tang2021LEOHybridCloudEdge,Wang2020GameSatelliteEdge,Zhai2024FedLEO}, intermittent visibility and residual contact time still govern the best path. A short contact window may carry a compact \textit{semantic summary}\footnote{We use \emph{semantic summary} to denote an onboard, goal-oriented representation, such as a confidence-scored label, a detected region, or a compressed feature embedding.}, but not a raw data payload.
	
	These options make inference freshness path-dependent. The conventional Age of Information (AoI) metric measures the age of the latest received update~\cite{Yates2021AoISurvey}, while correctness-aware and goal-oriented metrics ask whether a timely update is useful for the task~\cite{Maatouk2020AoII,QueryAoI,Maatouk2023AoIIEnabler,litensor1,Gan2025TaskOrientedAoI}. Remote-inference studies further show that freshness, representation choice, and inference accuracy interact in nontrivial ways~\cite{Shisher2022FreshnessSupervisedLearning,Shisher2023LearningCommunicationCodesign,Shisher2024TimelyRemoteInference,Li2026FreshnessEffectiveness}. We therefore use Age of Inference (\textsc{AoInf}), which resets only when a delivered update yields a task-valid inference outcome. Because cached semantics and offloaded observations have different generation and delivery delays, semantic transmission, raw-data offloading, and onboard computation create different reset opportunities rather than interchangeable packet deliveries.
	
	\textbf{Motivation}. The satellite must balance computation, storage, and transmission under intermittent satellite-ground links. Our guiding question is: \textit{under intermittent satellite connectivity, when should a satellite compute, transmit semantics, offload raw data, or wait to keep ground-side inference fresh?}
	
	\subsection{Related Works and Contributions}
	Existing work gives parts of this picture, but not the complete decision loop. Satellite edge-computing and satellite-terrestrial integrated network studies show how computation and communication can be distributed across orbital nodes, ground stations, and terrestrial clouds \cite{Soret2021SatelliteIoTOffloadBackhaul, Li2022AgeDrivenSatelliteIoT, Zarini2025LEOBDRISAoI, Delfani2025SemanticsLEOSatellites}. Age-oriented satellite updating models capture the role of timely delivery under intermittent
	access, but typically omit heterogeneous inference paths and
	semantic cache reuse \cite{Jiao2023AgeOptimalSatelliteIoT,soret20215g,11161975}. Remote-inference and goal-oriented communication studies show that freshness, representation, and accuracy should be jointly optimized \cite{ Maatouk2023AoIIEnabler, Li2024GoalOrientedSemantic, Shisher2024TimelyRemoteInference, Gan2025TaskOrientedAoI,uysal2024goalorient}. However, the optimal choice among computation, caching, offloading, and idling depends on contact dynamics, cache state, and action-specific latencies.
	
	We develop a contact-constrained LEO hybrid inference model capturing contact windows, onboard computation, semantic caching, and raw-data offloading. The main contributions are as follows:
	\begin{enumerate}
		\setlength{\topsep}{2pt}
		\setlength{\partopsep}{0pt}
		\setlength{\parsep}{0pt}
		\setlength{\itemsep}{2pt}
		\item \textbf{Model and metric.} We model hybrid orbital inference as a finite-state average-cost SMDP over contact phase, \textsc{AoInf}, cache occupancy, and cache age. Unlike throughput- or latency-oriented formulations~\cite{Xie2020SatelliteTerrestrialEdge,Bhattacherjee2020InOrbitComputing,Tang2021LEOHybridCloudEdge}, we optimize inference freshness, with \textsc{AoInf} resetting only after a task-valid update. Unlike AoI-oriented satellite updating and resource-allocation models that omit heterogeneous inference paths or semantic caching~\cite{Soret2021SatelliteIoTOffloadBackhaul,Li2022AgeDrivenSatelliteIoT,Jiao2023AgeOptimalSatelliteIoT,Zarini2025LEOBDRISAoI,Delfani2025SemanticsLEOSatellites}, our model captures \textit{computation, semantic transmission, offloading, and waiting}\footnote{Waiting can be optimal for minimizing AoI under certain conditions~\cite{Sun2017UpdateOrWait}; we therefore include it as a control action.} under intermittent satellite connectivity.
			
		\item \textbf{Solution methodology and threshold structure.} We transform the SMDP into an equivalent average-cost MDP and solve it via normalized RVI. We establish monotonicity and threshold-type switching conditions among actions.
			
		\item \textbf{Numerical evidence.} We show that the optimized hybrid policy adapts to cache freshness, residual visibility, and inference-success reliability, reducing average \textsc{AoInf} relative to random, onboard-only, and offload-only baselines.
	\end{enumerate}
	
	\section{System Model}
	\subsection{Overall System View}
	We consider a slotted-time hybrid inference system with a sensing satellite, an onboard semantic processor with a one-item cache, and a ground inference destination, as shown in Fig.~\ref{fig:system-logic}. At each decision epoch, the controller observes the ground \textsc{AoInf}, contact phase, and cache state, then chooses to wait, compute a fresh semantic summary, transmit cached semantics, or offload a fresh raw observation. The objective is to keep ground-side inference timely and task-valid despite intermittent connectivity.
	
	\subsection{Intermittent Satellite Connectivity}
	Within this overall architecture, communication opportunities are periodic. The satellite-ground link follows a period of length $P$, within which the link is available for exactly $W \leq P$ consecutive slots and unavailable for the remaining slots. Let $\phi_t \in \{0,1,\dots,P-1\}$ denote the contact phase at decision epoch $t$. The remaining visible time is
	\begin{equation}
		r(\phi_t) \triangleq
		\begin{cases}
			W-\phi_t, & \phi_t \in \{0,1,\dots,W-1\},\\
			0, & \phi_t \in \{W,\dots,P-1\}.
		\end{cases}
	\end{equation}
	The quantity \(r(\phi_t)\) determines communication feasibility. Semantic transmission must finish within \(r(\phi_t)\). For offloading, only the raw upload must fit in the visible window; ground inference may continue after contact ends.
	
	\subsection{Age of Inference}
	Let \(g_j\) be the generation time of update \(j\), \(d_j\) the slot when update \(j\) becomes available at the ground, and \(Z_j \in \{0,1\}\) indicate whether update \(j\) yields a successful inference outcome. The ground Age of Inference (\textsc{AoInf}) at slot \(n\) is:
	\begin{equation}
		\Delta_G(n) \triangleq n-\nu(n)
	\end{equation}
	where $\nu(n) \triangleq \max\{g_j : d_j \le n,\; Z_j=1\}$. If no successful update has arrived by slot \(n\), we set \(\Delta_G(n)=\hat{\Delta}\). Thus, \textsc{AoInf} is refreshed only by updates that are delivered and inference-successful. Between successful updates, \(\Delta_G(n)\) increases linearly with unit slope until it reaches the cap \(\hat{\Delta}\).
	
	\subsection{Objective and Decision Epochs}
	The optimal long-run time-average \textsc{AoInf} at the ground is
	\begin{equation}\label{rhostardefinition}
		\rho^\star \triangleq\min_{\pi} \limsup_{N\to\infty}\frac{1}{N}\,
		\mathbb{E}_\pi\!\left[\sum_{n=1}^{N}\Delta_G(n)\right].
	\end{equation}
	This criterion favors fresh task-relevant inference rather than raw-data delivery. Decisions are
	made only when the current action completes; because the next
	state and elapsed time both depend on the action, the problem is
	an SMDP~\cite{puterman1994mdp}.
	
	\subsection{SMDP Formulation}
	\subsubsection{State Space.}
	At each epoch $t$, the system state is:
	\begin{equation}
		s_t \triangleq (\Delta_t,\phi_t,q_t,\tau_t),
	\end{equation}
	where $\Delta_t \in \{1,2,\dots,\hat{\Delta}\}$ denotes the ground \textsc{AoInf} at the beginning of the epoch $t$, $\phi_t$ is the contact phase, $q_t \in \{0,1\}$ is the cache-occupancy indicator with $q_t=1$ when the semantic cache contains a semantic summary and $q_t=0$ when the cache is empty, and $\tau_t \in \{0,1,\dots,\hat{\Delta}\}$ denotes the age of the cached semantic summary, with $\tau_t=0$ whenever $q_t=0$. We collect the non-\textsc{AoInf} variables in the mode descriptor
	\begin{equation}
		m_t = (\phi_t,q_t,\tau_t),
	\end{equation}
	so that the full state is $(\Delta_t,m_t)$. The admissible state space:
	\begin{equation}
		\mathcal{S}
		=
		\Bigl\{
		(\Delta,m):
		q=0\Rightarrow \tau=0
		\Bigr\}.
		\label{eq:state-space}
	\end{equation}
	
	\subsubsection{Action Space.}
	At each decision epoch, the controller selects one action from the candidate action set
	\begin{equation}
		\mathcal{A}_0=\{\textit{idle},\textit{compute},\textit{tx},\textit{offload}\}.
	\end{equation}
	The admissible subset depends on the current mode \(m=(\phi,q,\tau)\). The four candidate actions are defined as follows:
	{\setlength{\parindent}{0pt}
		\begin{list}{}{%
				\setlength{\leftmargin}{0pt}%
				\setlength{\labelwidth}{0pt}%
				\setlength{\labelsep}{0pt}%
				\setlength{\itemindent}{0pt}%
				\setlength{\listparindent}{0pt}%
				\setlength{\topsep}{2pt}%
				\setlength{\partopsep}{0pt}%
				\setlength{\parsep}{0pt}%
				\setlength{\itemsep}{2pt}%
			}
			\item \textbf{1) Idle.} Under \textit{idle}, the controller waits for one slot without a new inference update with action duration \(L_I=1\).
			
			\item \textbf{2) Compute.} Under \textit{compute}, onboard inference generates a fresh semantic summary and overwrites the cache. The action duration is \(L_C=C_S\). The observation is generated at the start of computation, so the cached summary is \(L_C\) slots old.
			
			\item \textbf{3) Transmit cached semantics.} Under \textit{tx}, the controller transmits the currently cached semantic summary to the ground. The action duration is \(L_T=U_{\mathrm{tx}}\). This action is feasible only if the cache is nonempty and the semantic summary can be delivered before the current contact window closes, i.e., \(q=1\) and \(r(\phi)\ge U_{\mathrm{tx}}\).
			
			\item \textbf{4) Offload raw data.} Under \textit{offload}, the controller uploads a fresh raw observation to the ground for remote inference. The SMDP holding time is \(L_O=U_{\mathrm{img}}+C_L\), where \(U_{\mathrm{img}}\) is the raw-data upload time and \(C_L\) is the ground inference time. This action is feasible when the raw upload can be completed within the current contact window, i.e., \(r(\phi)\ge U_{\mathrm{img}}\).
		\end{list}
	}
	
	Thus, the admissible action set for mode \(m=(\phi,q,\tau)\) is
	\begin{equation}
		\begin{aligned}
			\mathcal{A}(m)
			=&\{\textit{idle},\textit{compute}\}
			\cup \{\textit{tx}: q=1,\ r(\phi)\ge U_{\mathrm{tx}}\} \\
			&\cup \{\textit{offload}: r(\phi)\ge U_{\mathrm{img}}\}.
		\end{aligned}
	\end{equation}
	
	\subsubsection{State Transitions and Cost}
	For each action \(a\in\mathcal{A}_0\) with duration \(L_a\), define the contact-phase update operator:
	\begin{equation}
		\Phi_a(\phi)\triangleq (\phi+L_a)\bmod P.
	\end{equation}
	In particular, \(\Phi_I\), \(\Phi_C\), \(\Phi_T\), and \(\Phi_O\) correspond to the actions \textit{idle}, \textit{compute}, \textit{tx}, and \textit{offload}, respectively. The mode transition is
	\begingroup
	\small
	\begin{equation}
		\label{eq:mode-transition}
		f(m,a)=
		\begin{cases}
			\bigl(\Phi_I(\phi),\,1,\,\min\{\tau+L_I,\hat{\Delta}\}\bigr),
			& a=\textit{idle},\ q=1,\\
			\bigl(\Phi_I(\phi),\,0,\,0\bigr),
			& a=\textit{idle},\ q=0,\\
			\bigl(\Phi_C(\phi),\,1,\,\min\{L_C,\hat{\Delta}\}\bigr),
			& a=\textit{compute},\\
			\bigl(\Phi_T(\phi),\,0,\,0\bigr),
			& a=\textit{tx},\\
			\bigl(\Phi_O(\phi),\,q,\,q\min\{\tau+L_O,\hat{\Delta}\}\bigr),
			& a=\textit{offload}.
		\end{cases}
	\end{equation}
	\endgroup
	For tractability, semantic transmission is single-use: once selected, the cached object is consumed regardless of success. This makes the post-action mode transition deterministic\footnote{Allowing a failed semantic transmission to remain in the cache is a meaningful extension for future work.}.
	
	Let \(\xi_T(m)\) and \(\xi_O\) denote the reset values after successful semantic transmission and successful raw-data offloading, respectively. A successful semantic transmission delivers a cached result that is already \(\tau\) slots old and incurs an additional transmission delay \(L_T\), whereas a successful offloaded update corresponds to a freshly sensed observation and therefore depends only on the total offloading delay \(L_O\). Accordingly, $\xi_T(m)$ and $\xi_O$ are given by
	\begin{equation}
		\label{eq:success-reset}
		\begin{aligned}
			\xi_T(m) &= \min\{\tau+L_T,\hat{\Delta}\},\quad
			\xi_O = \min\{L_O,\hat{\Delta}\}.
		\end{aligned}
	\end{equation}
	If no task-valid update reaches the ground, \textsc{AoInf} ages during the action holding time. We denote this evolution by
	\begin{equation}
		\label{eq:failure-update}
		T_a(\Delta)=\min\{\Delta+L_a,\hat{\Delta}\}.
	\end{equation}
	Let \(p_a\) denote the probability that action \(a\) yields a task-valid ground update, namely \(p_a=0\) for \(a\in\{\textit{idle},\textit{compute}\}\), \(p_a=p_T\) for \(a=\textit{tx}\), and \(p_a=p_O\) for \(a=\textit{offload}\). Using the shorthand \(\xi_a(m_t)=\xi_T(m_t)\) for \(a=\textit{tx}\) and \(\xi_a(m_t)=\xi_O\) for \(a=\textit{offload}\), the \textsc{AoInf} transition can be written as
	\begin{equation}
		\label{eq:aoinf-transition}
		\Delta_{t+1}=
		\begin{cases}
			T_a(\Delta_t), & \text{with probability } 1-p_a,\\
			\xi_a(m_t), & \text{with probability } p_a.
		\end{cases}
	\end{equation}
	
	The one-step SMDP cost is the accumulated truncated \textsc{AoInf} area during the action execution interval:
	\begin{equation}
		R(\Delta,a)=\sum_{i=0}^{L_a-1}\min\{\Delta+i,\hat{\Delta}\}.
	\end{equation}
	
	\section{Optimality Equation and Solution}
	\subsection{Average-Cost SMDP} Since $\mathcal{S}$ and $\mathcal{A}(m)$ are finite after truncation, the proposed framework is a finite-state average-cost SMDP. Let $\rho^\star$ denote the optimal long-run average \textsc{AoInf} per slot defined in \eqref{rhostardefinition}, and let $V(\Delta,m)$ denote the corresponding differential value function. The average-cost optimality equation (ACOE) is given by~\cite[Eq. 4.1]{howard1960dynamic}:
	\begin{equation}\label{eq:acoe}
		V(\Delta,m)=\min_{a\in\mathcal{A}(m)} Q(\Delta,m,a),
	\end{equation}
	where the action-value term $Q(\Delta,m,a)$ is, with \(m^+=f(m,a)\):
	\begin{equation}\label{eq:q-value}
		\begin{aligned}
			Q(\Delta,m,a)=\; & R(\Delta,a)-\rho^\star L_a
			+ (1-p_a)V\bigl(T_a(\Delta),m^+\bigr) \\
			& + p_aV\bigl(\xi_a(m),m^+\bigr),
		\end{aligned}
	\end{equation}
	where \(f(m,a)\) is given in \eqref{eq:mode-transition}.
	
	Because actions contribute different numbers of physical slots,
	decision-epoch costs cannot be compared. We therefore
	use the SMDP-to-MDP construction~\cite{puterman1994mdp}. The transformation preserves stationary optimal policies and yields a uniform-step Bellman equation to which RVI can be applied.
	\subsection{SMDP-to-MDP Transformation}
	Let $s=(\Delta,m)$, and let $P(s'\mid s,a)$ denote the SMDP transition probability determined by \eqref{eq:mode-transition} and \eqref{eq:aoinf-transition}. Choose a constant $\theta\in(0,\min_a L_a]$. The transformed one-step cost and transition kernel are then defined as follows:
	\begin{equation}\label{newmdpcost}
		\bar{R}_{\theta}(s,a) \triangleq \theta \frac{R(\Delta,a)}{L_a}.
	\end{equation}
	
	\begin{equation}\label{newtransition}
		\bar{P}_\theta(s'\mid s,a)=
		\begin{cases}
			\dfrac{\theta}{L_a} P(s'\mid s,a), & s' \neq s,\\
			1-\dfrac{\theta}{L_a}+\dfrac{\theta}{L_a} P(s\mid s,a), & s'=s.
		\end{cases}
	\end{equation}
	
	\begin{theorem}[Equivalent MDP]
		The transformed MDP defined by \eqref{newmdpcost} and \eqref{newtransition} has the same stationary optimal policies as the original SMDP. If $\rho^{\star}$ is the optimal average \textsc{AoInf} of the SMDP, then the optimal average cost of the transformed MDP is $\bar{\rho}^{\star}=\theta\rho^{\star}$.
	\end{theorem}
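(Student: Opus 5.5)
The plan is to work entirely at the level of optimality equations. I will show that a pair $(\rho,V)$ solves the SMDP ACOE \eqref{eq:acoe}--\eqref{eq:q-value} exactly when the pair $(\bar\rho,\bar V)=(\theta\rho,V)$ solves the average-cost optimality equation of the transformed MDP, with the minimizing actions matching state by state. The existence of an ACOE solution for the original problem is taken from the finite-state, finite-action setting already invoked in the paper (via \cite{puterman1994mdp,howard1960dynamic}). Because $\mathcal{S}$ and $\mathcal{A}(m)$ are finite and all holding times satisfy $L_a\ge 1$, the standard verification theorems for finite average-cost (S)MDPs will then give both conclusions: optimal stationary policies are exactly the (conserving) minimizers of the respective equations, and the optimal gains are $\rho^\star$ and $\bar\rho^\star$.

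\textbf{Step 1 (algebraic equivalence).} Write the SMDP ACOE as $V(s)=\min_a\{R(\Delta,a)-\rho L_a+\sum_{s'}P(s'\mid s,a)V(s')\}$. Fix $a$, multiply its bracket by $\theta/L_a>0$, and add $V(s)-V(s)\cdot\tfrac{\theta}{L_a}$ to both sides. Using \eqref{newmdpcost} and \eqref{newtransition}, the statement that $V(s)\le$ the $a$-bracket, with equality, is equivalent to
\begin{equation*}
V(s)\le \bar R_\theta(s,a)-\theta\rho+\sum_{s'}\bar P_\theta(s'\mid s,a)V(s'),
\end{equation*}
with equality for the same $a$. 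The identity $\sum_{s'}\bar P_\theta(s'\mid s,a)V(s')=(1-\tfrac{\theta}{L_a})V(s)+\tfrac{\theta}{L_a}\sum_{s'}P(s'\mid s,a)V(s')$ follows directly from \eqref{newtransition}, including the diagonal term. Since the scaling factor is positive and depends only on $(s,a)$, the inequality for each action and the equality for the minimizer are preserved in both directions. Hence the argmin sets coincide at every state, and $\bar\rho=\theta\rho$.

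\textbf{Step 2 (well-posedness of the kernel).} The assumption $\theta\le\min_a L_a$ gives $1-\theta/L_a\ge0$, so $\bar P_\theta$ is a valid stochastic kernel. The transformed MDP is therefore a bona fide finite MDP. Its self-loops are aperiodic, which also justifies running RVI on it.

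\textbf{Step 3 (from equations to optimality).} This is the main obstacle. The verification step needs care because the SMDP need not be unichain: the deterministic phase rotation and the cache dynamics may create several recurrent classes. I would first try to establish a weak-accessibility/unichain property. Under idle the phase cycles through all of $\{0,\dots,P-1\}$ and the cache empties only via tx; moreover the state $\Delta=\hat\Delta$ with empty cache is reachable from every state, using idle to saturate the age, so a single recurrent class can be reached under suitable policies. This gives a constant optimal gain, and then \cite[Thm.~8.4.1 and Sec.~11.4]{puterman1994mdp} applies. If unichain fails, the fallback is the multichain version, which uses the pair of optimality equations. The same positive rescaling converts the first equation $\min_a\sum_{s'}P(s'\mid s,a)g(s')-g(s)=0$ into its transformed counterpart, because $\sum_{s'}\bar P_\theta g-g=\tfrac{\theta}{L_a}\bigl(\sum_{s'}Pg-g\bigr)$. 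The argument of Step 1 then goes through with $g$ in place of $\rho$.

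A complementary sanity check is to match the gains directly for a fixed stationary policy $\pi$. The stationary distribution of $\bar P_\theta^\pi$ is proportional to $\mu^\pi(s)L_{\pi(s)}$, where $\mu^\pi$ is the embedded-chain distribution. By the renewal-reward theorem, this yields $\bar\rho^\pi=\theta\sum\mu R/\sum\mu L=\theta\rho^\pi$ for every $\pi$, which confirms $\bar\rho^\star=\theta\rho^\star$ and that the minimizing policies agree.
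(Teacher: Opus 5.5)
Your Step 1 is the paper's proof, and it is correct: each transformed action gap is $\theta/L_a$ times the corresponding SMDP gap, so the two optimality equations hold together with $\bar\rho^\star=\theta\rho^\star$ and the same minimizing actions. Your Steps 2--3 add the well-posedness and verification that the paper leaves implicit, and the policy-wise identity $\bar\rho^\pi=\theta\rho^\pi$ is what actually gives equality of the full sets of optimal stationary policies, since being conserving for one $V$ is sufficient for optimality but not necessary. The one slip is the aperiodicity remark: it needs $\theta<\min_a L_a$ strictly, because here $\min_a L_a=L_I=1$, and $\theta=1$ leaves the idle transitions, which cycle the phase with period $P$, unchanged.
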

	\begin{proof}
		Substituting \eqref{newmdpcost}--\eqref{newtransition} into the transformed MDP ACOE and setting \(\bar{\rho}^\star=\theta\rho^\star\) yields
		\begin{equation}
			\rho^\star
			=
			\min_{a\in\mathcal{A}(m)}
			\frac{
				R(\Delta,a)
				+
				\sum_{s'}P(s'\mid s,a)V(s')
				-
				V(s)
			}{
				L_a
			},
			\label{eq:smdp-ratio-form}
		\end{equation}
		which is equivalent to the SMDP optimality equation
		\eqref{eq:acoe} and \eqref{eq:q-value}. The factor \(\theta\) is positive and action-independent after the division by \(L_a\), so the minimizing actions are unchanged. Hence, stationary optimal policies are preserved.
	\end{proof}
	
	\subsection{Structural Results}
	
	We record two structural properties that explain the policy patterns observed in the numerical results. The first is monotonicity in the current ground-side \textsc{AoInf}; the second is a cache-age threshold for choosing between cached semantic transmission and onboard recomputation.
	
	\begin{lemma}[Monotonicity in the current \textsc{AoInf}]
		\label{lem:delta-monotone}
		For every fixed mode \(m\), the differential value function of the transformed MDP can be chosen nondecreasing in the ground \textsc{AoInf}:
		\[
		\bar V(\Delta_1,m)\le \bar V(\Delta_2,m),
		\qquad
		1\le \Delta_1\le \Delta_2\le \hat{\Delta}.
		\]
	\end{lemma}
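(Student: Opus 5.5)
We will prove the lemma by induction along the normalized relative value iteration applied to the transformed MDP, showing that monotonicity in \(\Delta\) is preserved by the Bellman operator and hence inherited by the limit. Write the transformed Bellman operator as
\[
(\mathcal{T}\bar V)(\Delta,m)=\min_{a\in\mathcal{A}(m)}\Bigl\{\bar R_\theta(s,a)+\sum_{s'}\bar P_\theta(s'\mid s,a)\bar V(s')\Bigr\},
\]
and let RVI iterate \(\bar V_{k+1}=\mathcal{T}\bar V_k-(\mathcal{T}\bar V_k)(s_{\mathrm{ref}})\) from \(\bar V_0\equiv 0\). Subtracting a state-independent constant does not affect monotonicity in \(\Delta\). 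It therefore suffices to show that \(\mathcal{T}\) maps functions that are nondecreasing in \(\Delta\) (for each fixed \(m\)) to functions with the same property. We then invoke convergence of RVI for this finite unichain/aperiodic MDP. Aperiodicity is guaranteed by the self-loop weight \(1-\theta/L_a\), or, if \(\theta=\min_a L_a\), we can take \(\theta\) strictly smaller. The limit is a differential value function, and pointwise limits of nondecreasing functions are nondecreasing, which gives the claim.

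For the preservation step, fix \(m\) and an admissible action \(a\). Note that \(\mathcal{A}(m)\) does not depend on \(\Delta\), so it is enough to show that each action term is nondecreasing in \(\Delta\); a minimum of nondecreasing functions is then nondecreasing. Using \eqref{newtransition}, the action term equals
\[
\theta\frac{R(\Delta,a)}{L_a}+\Bigl(1-\frac{\theta}{L_a}\Bigr)\bar V(\Delta,m)+\frac{\theta}{L_a}\Bigl[(1-p_a)\bar V\bigl(T_a(\Delta),m^+\bigr)+p_a\bar V\bigl(\xi_a(m),m^+\bigr)\Bigr].
\]
The self-loop correction for the case \(s'=s\) rewrites exactly into this form, because \(\bar P_\theta\) is a mixture of the identity and \(P\). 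We then check the four pieces in turn.

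\begin{itemize}
\item \(R(\Delta,a)=\sum_i\min\{\Delta+i,\hat\Delta\}\) is nondecreasing in \(\Delta\).
\item The coefficient \(1-\theta/L_a\ge 0\) because \(\theta\le\min_a L_a\), so the term \(\bigl(1-\tfrac{\theta}{L_a}\bigr)\bar V(\Delta,m)\) is nondecreasing by the induction hypothesis.
\item \(T_a(\Delta)=\min\{\Delta+L_a,\hat\Delta\}\) is nondecreasing in \(\Delta\), and \(m^+=f(m,a)\) does not depend on \(\Delta\), so \(\bar V(T_a(\Delta),m^+)\) is nondecreasing.
\item The success term \(\bar V(\xi_a(m),m^+)\) is constant in \(\Delta\), since \(\xi_a\) depends only on \(m\).
\end{itemize}

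All weights are nonnegative, so the action term is nondecreasing, which completes the induction.

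The main obstacle is justifying the passage to the limit: we need RVI to converge and the limit to be a valid differential value function. This requires a unichain or weak accessibility condition. We would argue it from the structure of the model: under any policy, the phase cycles deterministically, \(\Delta\) either hits \(\hat\Delta\) or resets, and the idle action always leads to states with \(q=0\) or saturated \(\tau\). If only existence of some solution to the optimality equation is needed, we would instead fall back on the vanishing-discount approach. The discounted value functions \(\bar V_\beta\) are nondecreasing in \(\Delta\) by the same operator argument. The differences \(\bar V_\beta-\bar V_\beta(s_{\mathrm{ref}})\) are bounded in the finite setting, so a subsequence converges to a differential value function that inherits monotonicity. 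This is why the lemma says the value function ``can be chosen'' nondecreasing.
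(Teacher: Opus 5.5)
Your proposal is correct and follows the same route as the paper. Both induct along the normalized value-iteration sequence, show each action term is nondecreasing in \(\Delta\) (monotone \(R\) and \(T_a\); \(\xi_a(m)\), \(f(m,a)\) and \(\mathcal{A}(m)\) all independent of \(\Delta\)), and pass to the limit. You are in fact more careful than the paper in isolating the self-loop term \(\bigl(1-\theta/L_a\bigr)\bar V(\Delta,m)\) and using \(\theta\le\min_a L_a\) to keep its weight nonnegative.

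One caution on the limit step, which the paper only asserts: the model is not unichain. For example, always-\textit{compute} with \(C_S=2\), \(P=30\) has two recurrent classes split by phase parity. So both your RVI-convergence route and your vanishing-discount fallback should rest on weak communication, i.e.\ a constant optimal gain. This holds because every state outside the closed class reachable from \((\hat\Delta,0,0,0)\) is transient under every policy. Combine that with the strong aperiodicity from \(\theta<\min_a L_a\), rather than appealing to a unichain property.
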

	
	\begin{proof}
		Initialize value iteration with a function nondecreasing in \(\Delta\). For any fixed action \(a\), the transformed one-step cost is nondecreasing in \(\Delta\), since
		\[
		R(\Delta,a)=\sum_{i=0}^{L_a-1}\min\{\Delta+i,\hat{\Delta}\}
		\]
		is nondecreasing. The no-reset update
		\[
		T_a(\Delta)=\min\{\Delta+L_a,\hat{\Delta}\}
		\]
		is also nondecreasing, while successful-update ages are independent of, or no worse than, the current \(\Delta\). Hence each action-value term preserves monotonicity in \(\Delta\), and so does the minimum over feasible actions. The claim follows by induction over the normalized value-iteration sequence and the finite-state limiting argument.
	\end{proof}
	
	This monotonicity implies that larger \textsc{AoInf} cannot reduce the future cost-to-go. We use it to characterize the local competition between transmitting a cached semantic summary and replacing it by a fresh onboard computation.
	
	\begin{proposition}[Cache-age threshold between transmission and recomputation]
		\label{prop:tx-compute-threshold}
		Fix \(\Delta\) and \(\phi\), and suppose the cache is nonempty and semantic transmission is feasible. Define
		\[
		D_{T,C}(\tau)
		\triangleq
		\bar Q(\Delta,\phi,1,\tau,\textit{tx})
		-
		\bar Q(\Delta,\phi,1,\tau,\textit{compute}).
		\]
		Then \(D_{T,C}(\tau)\) is nondecreasing in \(\tau\). Consequently, the set
		\[
		\{\tau: D_{T,C}(\tau)\le 0\}
		\]
		is a prefix interval of the cache-age axis. Equivalently, there exists a possibly degenerate threshold
		\[
		\tau_{T/C}^\star(\Delta,\phi)\in\{-1,0,\ldots,\hat{\Delta}\}
		\]
		such that cached semantic transmission is preferred to recomputation only when
		\[
		\tau\le \tau_{T/C}^\star(\Delta,\phi).
		\]
	\end{proposition}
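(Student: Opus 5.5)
The plan is to write the difference \(D_{T,C}(\tau)\) explicitly in terms of the transformed quantities \eqref{newmdpcost}--\eqref{newtransition}, and then track every place where \(\tau\) enters. For \(s=(\Delta,\phi,1,\tau)\), the transformed action value is \(\bar Q(s,a)=\bar R_\theta(s,a)-\bar\rho^\star+\sum_{s'}\bar P_\theta(s'\mid s,a)\bar V(s')\). Because of the self-loop mass in \eqref{newtransition}, this rearranges to
\begin{equation*}
\bar Q(s,a)=\Bigl(1-\tfrac{\theta}{L_a}\Bigr)\bar V(s)+\tfrac{\theta}{L_a}\Bigl[R(\Delta,a)-\rho^\star L_a+\textstyle\sum_{s'}P(s'\mid s,a)\bar V(s')\Bigr].
\end{equation*}
Subtracting the \textit{compute} expression from the \textit{tx} expression gives three groups of terms.
\begin{enumerate}
\item The reward terms \(\theta R(\Delta,T)/L_T-\theta R(\Delta,C)/L_C\) and the \(\rho^\star\) terms. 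These are independent of \(\tau\).
\item The continuation terms.
\begin{enumerate}
\item For \textit{compute}, the continuation is \(\tfrac{\theta}{L_C}\bar V\bigl(T_C(\Delta),\Phi_C(\phi),1,\min\{L_C,\hat\Delta\}\bigr)\), which is independent of \(\tau\) because computation overwrites the cache.
\item For \textit{tx}, the continuation is \(\tfrac{\theta}{L_T}\bigl[(1-p_T)\bar V(T_T(\Delta),\Phi_T(\phi),0,0)+p_T\bar V(\min\{\tau+L_T,\hat\Delta\},\Phi_T(\phi),0,0)\bigr]\). By Lemma~\ref{lem:delta-monotone} applied in the fixed mode \((\Phi_T(\phi),0,0)\), and since \(\xi_T\) is nondecreasing in \(\tau\), this term is nondecreasing in \(\tau\).
\end{enumerate}
\item The self-loop term \(\theta\bigl(\tfrac1{L_C}-\tfrac1{L_T}\bigr)\bar V(\Delta,\phi,1,\tau)\). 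This term also depends on \(\tau\).
\end{enumerate}

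The second step is an auxiliary lemma: \(\bar V(\Delta,\phi,1,\tau)\) is nondecreasing in \(\tau\). I would prove it by the same value-iteration induction as Lemma~\ref{lem:delta-monotone}, starting from a function that is jointly nondecreasing in \((\Delta,\tau)\) on \(q=1\) modes. The induction step checks each action in turn.
\begin{enumerate}
\item \textit{compute} removes all dependence on \(\tau\).
\item \textit{idle} and \textit{offload} map \(\tau\mapsto\min\{\tau+L_a,\hat\Delta\}\) monotonically. For \textit{offload}, the reset value \(\xi_O\) does not depend on \(\tau\).
\item \textit{tx} depends on \(\tau\) only through \(\xi_T\), which is monotone by Lemma~\ref{lem:delta-monotone}.
\end{enumerate}
Feasibility of each action depends only on \((\phi,q)\), not on \(\tau\). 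Hence the minimum over actions preserves monotonicity, and passing to the normalized RVI limit gives the lemma. With this in hand, the case \(L_C\le L_T\) is done, since the self-loop coefficient is then nonnegative and all three groups of terms are nondecreasing in \(\tau\).

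The main obstacle is the case \(L_T<L_C\). Here the self-loop coefficient is negative, so the auxiliary lemma pushes in the wrong direction. My first attempt would be an increment bound rather than plain monotonicity. Write \(\delta V(\tau)\triangleq\bar V(\Delta,\phi,1,\tau+1)-\bar V(\Delta,\phi,1,\tau)\), and let \(\delta\bar Q_{\mathrm{tx}}(\tau)\) be the corresponding increment of the \textit{tx} continuation term, \(\tfrac{\theta}{L_T}p_T\bigl[\bar V(\xi_T(\tau+1),\cdot)-\bar V(\xi_T(\tau),\cdot)\bigr]\). The goal is to show
\begin{equation*}
\theta\Bigl(\tfrac1{L_T}-\tfrac1{L_C}\Bigr)\,\delta V(\tau)\;\le\;\delta\bar Q_{\mathrm{tx}}(\tau).
\end{equation*}

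To get this inequality, use that a difference of minima is bounded by the largest increment among the action values. Then \(\delta V(\tau)\) is at most the maximum of the \(\tau\)-increments of the \textit{tx}, \textit{idle} and \textit{offload} action values; the \textit{compute} increment is zero. Note that \(\theta(1/L_T-1/L_C)<\theta/L_T\le 1\). I would strengthen the value-iteration induction hypothesis to carry a Lipschitz-type bound of the form \(\delta V(\tau)\le\kappa\,\delta\bar Q_{\mathrm{tx}}(\tau)\) and propagate it through the Bellman operator. Here \(\kappa\) should be chosen so that \(\theta(1/L_T-1/L_C)\kappa\le1\).

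If this propagation fails for the \textit{idle} and \textit{offload} branches, the fallback is to exploit the freedom in \(\theta\). Choosing \(\theta\) small makes the self-loop coefficient arbitrarily small relative to the \textit{tx} increment, because the latter scales like \(\theta/L_T\) times \(p_T\bar V\)-increments. This does not settle the sign of the relative size by itself, so it only helps where those \(\bar V\)-increments are strictly positive. Once \(D_{T,C}\) is nondecreasing in \(\tau\), the prefix-interval statement and the threshold \(\tau^\star_{T/C}(\Delta,\phi)=\max\{\tau:D_{T,C}(\tau)\le0\}\), with the value \(-1\) when this set is empty, follow immediately.
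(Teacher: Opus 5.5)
Your items 1 and 2 are exactly the paper's proof: the \textit{compute} continuation is $\tau$-free because the cache is overwritten, and \textit{tx} depends on $\tau$ only through $\xi_T$, which Lemma~\ref{lem:delta-monotone} makes monotone. The paper stops there, asserting that the \textit{compute} action value does not depend on $\tau$. You write $\bar Q(s,a)=(1-\theta/L_a)\bar V(s)+(\theta/L_a)\,Q(s,a)$, where $Q$ is the SMDP action value of \eqref{eq:q-value} evaluated at $\bar V$. In that form the paper's assertion holds for $Q$ but not for the uniformized $\bar Q$. Since $\theta\le\min_a L_a\le L_I=1$, the self-loop weight $1-\theta/L_C$ is positive whenever $L_C>1$, so $D_{T,C}$ does contain $\theta(1/L_C-1/L_T)\bar V(\Delta,\phi,1,\tau)$, as you observe. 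Your auxiliary lemma that $\bar V(\Delta,\phi,1,\tau)$ is nondecreasing in $\tau$ is sound. Do say explicitly that the self-loop term $(1-\theta/L_a)\bar V_k(s)$ in every backup, including \textit{compute}, is handled by the induction hypothesis, and run the induction jointly with that of Lemma~\ref{lem:delta-monotone}. With that, you have a complete proof of the literal statement when $L_C\le L_T$, which covers the baseline $C_S=2<U_{\mathrm{tx}}=3$.

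The genuine gap is the case $L_T<L_C$, which you leave open. After dividing by $\theta/L_T$, your target is $\delta V(\tau)\le \frac{L_C}{L_C-L_T}\,p_T\bigl[\bar V(\xi_T(\tau+1),m_T^+)-\bar V(\xi_T(\tau),m_T^+)\bigr]$ with $m_T^+=(\Phi_T(\phi),0,0)$. When \textit{idle} or \textit{offload} is optimal at $(\Delta,\phi,1,\tau)$, $\delta V(\tau)$ is controlled only by a cache-age increment at a later state. That is ultimately a $\Delta$-increment of $\bar V$ at a later reset epoch and phase, e.g.\ one followed by a blackout. Moreover, \textit{tx} may be infeasible at the intermediate states, so your $\kappa$-hypothesis is not even defined where it must be propagated. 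Neither Lemma~\ref{lem:delta-monotone} nor $\tau$-monotonicity compares $\Delta$-increments across different $(\Delta,\phi)$. Closing this case needs a new structural fact. One candidate is that the regret $Q(\Delta,m,\textit{tx})-\bar V(\Delta,m)$, with $m=(\phi,1,\tau)$, is nondecreasing in $\tau$, i.e.\ $\delta V(\tau)\le p_T[\bar V(\xi_T(\tau+1),m_T^+)-\bar V(\xi_T(\tau),m_T^+)]$; this suffices because $L_C/(L_C-L_T)>1$.

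Your $\theta$-fallback cannot work. By the proof of the Equivalent MDP theorem, the transformed ACOE is the $\theta$-free SMDP ACOE, so $\bar V$ may be taken independent of $\theta$ and $\bar Q(s,a)-\bar V(s)=(\theta/L_a)(Q(s,a)-\bar V(s))$. Hence $D_{T,C}$ is exactly $\theta$ times a $\theta$-independent function: the self-loop coefficient and the \textit{tx} increment scale identically in $\theta$. You therefore have two options. Either prove a regret property of this kind, or make the comparison in the SMDP normalization $Q(\Delta,m,\textit{tx})-Q(\Delta,m,\textit{compute})$, as the paper implicitly does. There no self-loop term appears, and your items 1 and 2 alone are the complete proof.
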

	
	\begin{proof}
		The compute action overwrites the cache and therefore its action value is independent of the current cache age \(\tau\). By contrast, under cached semantic transmission, the successful post-action age is nondecreasing in \(\tau\). By Lemma~\ref{lem:delta-monotone}, the corresponding continuation value is nondecreasing in \(\tau\). Hence \(\bar Q(\Delta,\phi,1,\tau,\textit{tx})\) is nondecreasing in \(\tau\), while \(\bar Q(\Delta,\phi,1,\tau,\textit{compute})\) is independent of \(\tau\). Thus \(D_{T,C}(\tau)\) is nondecreasing. On the finite cache-age grid, its sublevel set is a prefix interval, which proves the threshold claim.
	\end{proof}
	
	The proposition implies that cached semantics are useful only below a mode-dependent age threshold. Once the cached result becomes too stale, the controller prefers recomputation or another feasible refresh action.
	
	\subsection{Solution Method}
	
	We solve the transformed MDP by normalized RVI, a standard method for finite average-cost MDPs~\cite{howard1960dynamic,puterman1994mdp}, to optimize the transformed objective \(\bar{\rho}^\star=\theta\rho^\star\).
	
	Let \(s=(\Delta,m)\) and \(s^\dagger\) be a reference state. Given a relative value iterate \(\bar V_k\), the dynamic-programming backup is
	\begin{equation}
		\widetilde V_{k+1}(s)
		=
		\min_{a\in\mathcal{A}(m)}
		\left\{
		\bar R_\theta(s,a)
		+
		\sum_{s'\in\mathcal{S}}
		\bar P_\theta(s'\mid s,a)\bar V_k(s')
		\right\}.
		\label{eq:rvi-backup}
	\end{equation}
	The next iterate is normalized as
	\begin{equation}
		\bar V_{k+1}(s)
		=
		\widetilde V_{k+1}(s)
		-
		\widetilde V_{k+1}(s^\dagger),
		\label{eq:rvi-normalization}
	\end{equation}
	which removes the additive indeterminacy without changing greedy
	action comparisons. Iteration stops when $\operatorname{span}(x)=\max_{s\in\mathcal{S}}x(s)-\min_{s\in\mathcal{S}}x(s)$ and terminates when    \\$\operatorname{span}\!\left(\bar V_{k+1}-\bar V_k\right)
	\le \varepsilon_{\mathrm{RVI}}$. The details of the algorithm are shown in Algorithm \ref{alg:normalized-rvi}.
	
	\begin{algorithm}[t]
		\caption{Normalized RVI for the Transformed MDP}
		\label{alg:normalized-rvi}
		\small
		\begin{algorithmic}[1]
			\Require State space \(\mathcal{S}\), feasible action sets \(\mathcal{A}(m)\), transformed cost \(\bar R_\theta\), transformed kernel \(\bar P_\theta\), reference state \(s^\dagger\), tolerance \(\varepsilon_{\mathrm{RVI}}\)
			\State Initialize \(\bar V_0(s)=0\) for all \(s\in\mathcal{S}\), and set \(k\gets 0\)
			\Repeat
			\For{each state \(s=(\Delta,m)\in\mathcal{S}\)}
			\State \(\widetilde V_{k+1}(s)\gets \min_{a\in\mathcal{A}(m)}\{\bar R_\theta(s,a) + \sum_{s'\in\mathcal{S}} \bar P_\theta(s'\mid s,a)\bar V_k(s')\}\)
			\EndFor
			\State \(\bar V_{k+1}(s)\gets \widetilde V_{k+1}(s)-\widetilde V_{k+1}(s^\dagger)\), \(\forall s\in\mathcal{S}\)
			\State \(d_k\gets\operatorname{span}(\bar V_{k+1}-\bar V_k)\), \(k\gets k+1\)
			\Until{\(d_k\le \varepsilon_{\mathrm{RVI}}\)}
			\State \textbf{return} \(\hat\pi(s)\)
			\Statex \(\in \arg\min_{a\in\mathcal{A}(m)}\{\bar R_\theta(s,a) + \sum_{s'\in\mathcal{S}} \bar P_\theta(s'\mid s,a)\bar V_k(s')\}\)
		\end{algorithmic}
	\end{algorithm}
	
	\section{Numerical Results}
	\begin{table}[t]
		\centering
		\caption{Baseline numerical parameters.}
		\label{tab:baseline-parameters}
		\begin{tabular}{lc}
			\toprule
			\textbf{Parameter} & \textbf{Value} \\
			\midrule
			\textsc{AoInf} cap \(\hat{\Delta}\) & \(40\) slots \\
			Visibility period \(P\) & \(30\) slots \\
			Visible-window length \(W\) & \(20\) slots \\
			Onboard compute duration \(C_S\) & \(2\) slots \\
			Semantic transmission \(U_{\mathrm{tx}}\) & \(3\) slots \\
			Raw-data upload \(U_{\mathrm{img}}\) & \(5\) slots \\
			Ground inference duration \(C_L\) & \(1\) slot \\
			\bottomrule
		\end{tabular}
	\end{table}
	Unless otherwise stated, all results use the baseline parameter setting in Table~\ref{tab:baseline-parameters}. In practice, LEO contact ratios are often around 5\%--15\%; here we use larger visible-window settings (e.g., \(W=20\) with \(P=30\)) to make the policy behavior easier to visualize.
	
	\subsection{Representative Trajectory}
	
	% \begin{figure}[t]
		%   \centering
		%   \includegraphics[width=0.5\textwidth]{5_5sim2v3_3.pdf}
		%   \caption{Trajectories under the optimized hybrid policy. (a) }
		%   \Description{A three-panel steady-state simulation figure spanning both columns. Panel (a) shows the ground Age of Inference over time with repeated sawtooth growth and resets. Panel (b) shows the selected action sequence with visibility windows shaded. Panel (c) shows cached semantic age over time.}
		%   \label{fig:trajectory}
		% \end{figure}
	
	\begin{figure}[t]
		\centering
		\includegraphics[width=0.98\columnwidth]{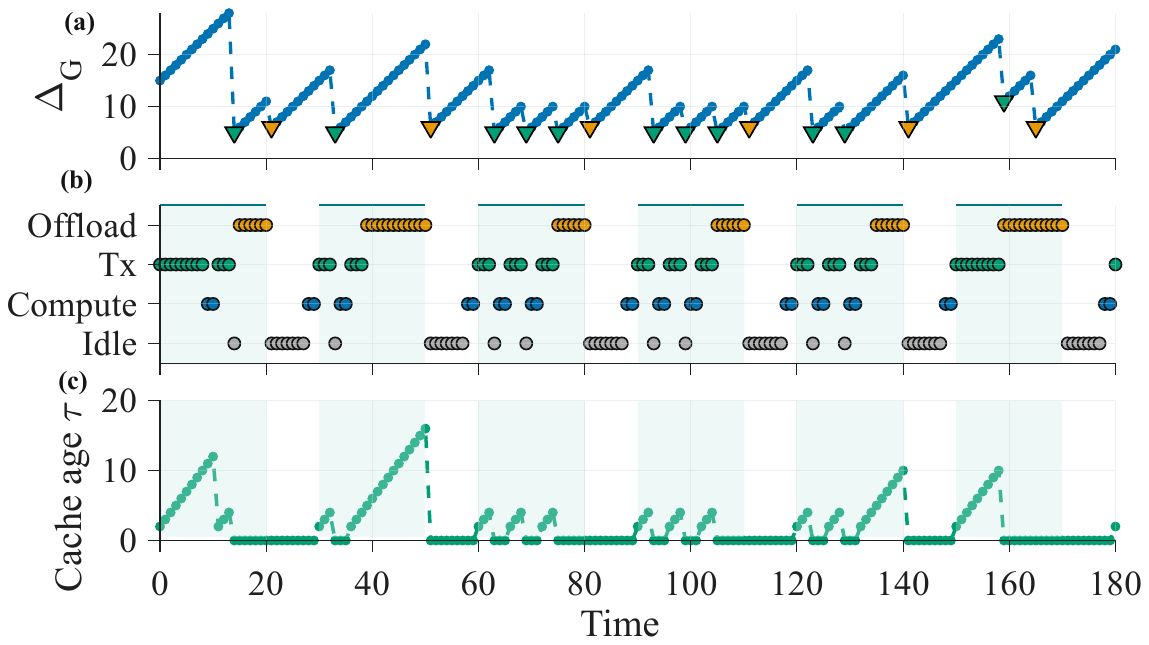}
		\caption{Trajectories under the optimized hybrid policy with $p_S=0.6$ and $p_L=0.7$. Panel (a) shows the ground Age of Inference over time with repeated sawtooth growth and resets, (b) shows the selected action sequence with visibility windows shaded, (c) shows cached semantic age over time.}
		\Description{A three-panel steady-state simulation figure spanning both columns. Panel (a) shows the ground Age of Inference over time with repeated sawtooth growth and resets. Panel (b) shows the selected action sequence with visibility windows shaded. Panel (c) shows cached semantic age over time.}
		\label{fig:trajectory}
	\end{figure}
	
	Figure~\ref{fig:trajectory} illustrates how the optimized controller coordinates computation, caching, and intermittent connectivity.
	
	In the top panel, \textsc{AoInf} exhibits sawtooth-like growth between successful updates. Unlike classical AoI, reset levels are not uniform. Successful semantic transmission reduces \textsc{AoInf} to cached semantic age plus transmission duration, whereas successful offloading resets it to the total offloading delay. As a result, the figure has multiple reset levels rather than a single baseline.
	
	Communication occurs within visible windows, reflecting the feasibility constraint imposed by $r(\phi_t)$. Within these windows, the controller switches between transmission and offloading based on cache state and expected freshness gain.
	
	Compute actions fill the cache, while successful transmissions deplete it. Cached semantics typically remain fresh, so stale reuse is rarely preferred.
	
	Overall, the policy coordinates computation and communication instead of acting greedily.
	
	The trajectory also reveals a threshold structure in the optimal policy. In blackout phases, the choice reduces to waiting or local recomputation. In visible phases with an empty cache, the controller balances waiting, recomputation, and raw-data offloading. With a nonempty cache, semantic transmission is preferred only when the content is sufficiently fresh. For fixed mode variables, larger \textsc{AoInf} values favor earlier updates.
	
	\begin{figure}[t]
		\centering
		\includegraphics[trim=20 220 18 180,clip,width=0.98\columnwidth]{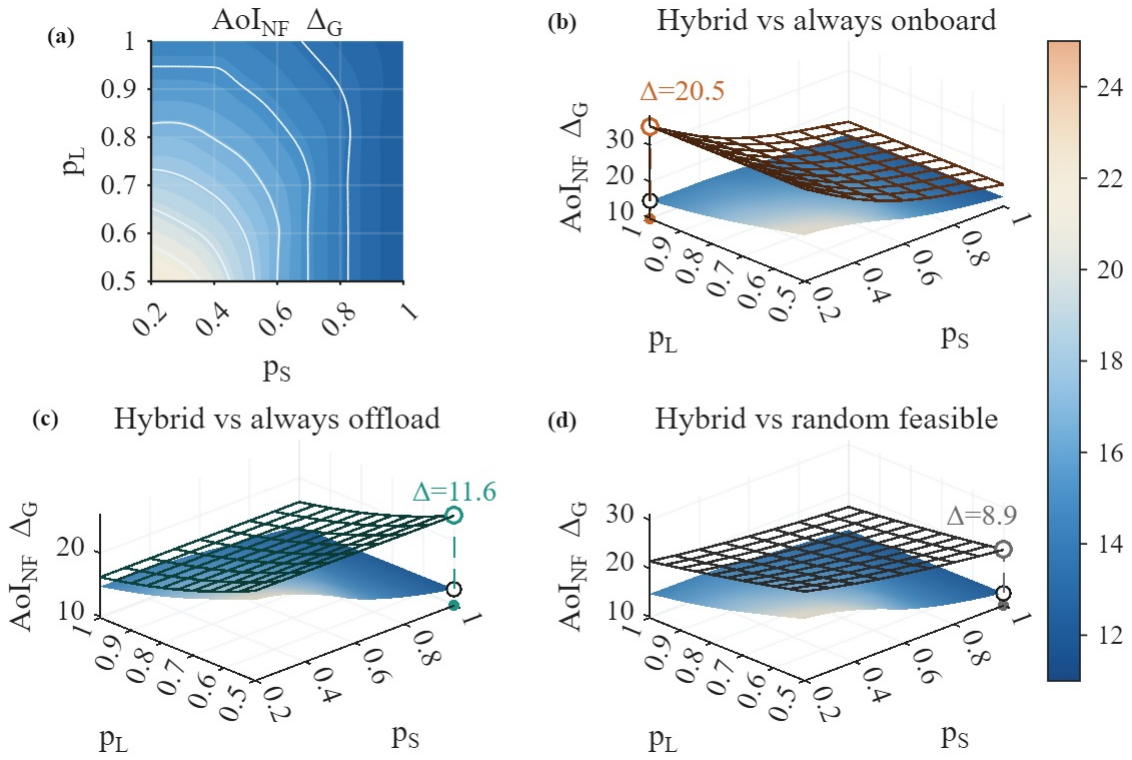}
		\caption{Average \textsc{AoInf} comparison of the optimized hybrid policy against three benchmark policies across semantic-success and offloading-success regimes.}
		\Description{A three-dimensional comparison plot showing the optimized hybrid policy and three benchmark policies over different semantic-success and offloading-success probability settings, with average Age of Inference on the vertical axis.}
		\label{fig:benchmark-comparison}
	\end{figure}
	
	\subsection{Benchmark Policies}
	We compare against Random, Onboard, and Offload.
	
	Figure~\ref{fig:benchmark-comparison} compares the optimized hybrid policy with these baselines across semantic-success and offloading-success regimes. Random chooses feasible actions uniformly, Onboard prioritizes local computation with cached transmission, and Offload uses raw-data offloading when feasible.
	
	The optimized hybrid policy achieves lower average \textsc{AoInf} than all policies across most regimes, demonstrating the benefit of jointly coordinating onboard computation, semantic caching, and raw-data offloading.
	
	The advantage of semantic transmission increases as the semantic-success probability \(p_S\) increases. In this regime, cached semantic updates are more likely to succeed, making low-delay semantic delivery particularly effective.
	
	When the offloading success probability \(p_L\) is high, the offload-only policy improves because fresh raw observations can reliably reach the ground. Still, the hybrid controller retains an advantage by adapting to the current system state.
	
	The random policy performs poorly because it ignores visibility and freshness. These results show that the proposed hybrid controller adapts to heterogeneous communication and inference conditions.
	
	\section{Conclusion}
	This paper develops an average-cost control framework for hybrid semantic caching and offloading under inference freshness in LEO networks. Using \textsc{AoInf} as the performance metric, we formulate the problem as an average-cost SMDP and obtain the optimal policy through an equivalent MDP formulation. Numerical results show that the hybrid policy adapts to communication availability and inference-success reliability and outperforms single-mechanism baselines across representative regimes.
	
	\newpage
	\section*{Acknowledgment}
	This work was supported by Türk Telekomünikasyon A.Ş. through the project ``Goal-Oriented Scheduling for Networked Control,'' Project No. 136124, and by the European Union through the ERC Advanced Grant GO SPACE, Grant No. 101122990. Views and opinions expressed are, however, those of the authors only and do not necessarily reflect those of the European Union or the European Research Council Executive Agency. Neither the European Union nor the granting authority can be held responsible for them.
	
	\begingroup
	\footnotesize
	\setlength{\bibsep}{0pt}
	\setlength{\itemsep}{0pt}
	\bibliographystyle{IEEEtranN}
	\bibliography{references}
	\endgroup
	
\end{document}